\documentclass{aip-cp}

\usepackage[numbers]{natbib}
\usepackage{rotating}
\usepackage{graphicx}
\usepackage[T1]{fontenc}
\usepackage[latin9]{inputenc}
\usepackage{color}
\usepackage{units}
\usepackage{amssymb}

\usepackage{pgf,tikz}
\usepackage{mathrsfs}
\usetikzlibrary{arrows}
\definecolor{uuuuuu}{rgb}{0.26666666666666666,0.26666666666666666,0.26666666666666666}
\definecolor{qqwuqq}{rgb}{0.,0.39215686274509803,0.}

\synctex=-1
\makeatletter
\newtheorem{theorem}{\protect\theoremname}

\newtheorem{definition}[theorem]{\protect\definitionname}
\newtheorem{proposition}[theorem]{\protect\propositionname}
\newtheorem{example}[theorem]{\protect\examplename}
\newtheorem{conjecture}[theorem]{\protect\conjecturename}
\newtheorem{lemma}[theorem]{\protect\lemmaname}
\newenvironment{proof}[1][Proof]{\noindent\textbf{#1.} }{\ \rule{0.5em}{0.5em}}

\makeatother
  \providecommand{\corollaryname}{Corollary}
  \providecommand{\conjecturename}{Conjecture}
  \providecommand{\definitionname}{Definition}
  \providecommand{\examplename}{Example}
  \providecommand{\lemmaname}{Lemma}
  \providecommand{\propositionname}{Proposition}
  \providecommand{\theoremname}{Theorem}

\begin{document}

\title{Maximum Entropy and Sufficiency}
\author[aff1]{Peter Harremo{\"e}s\corref{cor1}}

\eaddress[url]{http://peter.harremoes.dk} 

\affil[aff1]{Niels Brock, Copenhagen Business College, Copenhagen, Denmark} 
\corresp[cor1]{Corresponding author: harremoes@ieee.org}

\maketitle

\begin{abstract}
The notion of Bregman divergence and sufficiency will be defined on general
convex state spaces. It is demonstrated that only spectral sets can have a
Bregman divergence that satisfies a sufficiency condition. Positive elements with trace 1 in a Jordan algebra are examples of spectral sets, and the most important example is the set of density matrices with complex entries. It is conjectured that information theoretic considerations lead directly to the notion of Jordan algebra under some regularity conditions.
\end{abstract}


\section{Introduction}

The maximum entropy method as introduced by Jaynes \cite{Jaynes1957} works quite well in various applications and an obvious question is why the same entropy formula appear in very different applications. In each of the applications the appearance of logarithmic terms have been given its own justification. Other scientists consider the maximum entropy principle as a general tool justified by information theory. It is quite obvious that the function that we would like to optimize should be convex or concave in order for our procedures to lead to a unique distribution. In this paper we will look at the consequences of requiring that our optimizer does not depend on irrelevant information, which is formulated as a sufficiency condition. 

The use of scoring rules has a long history in statistics. An early
contribution was the idea of minimizing the sum of square deviations
that dates back to Gauss and works perfectly for Gaussian distributions.
In the 1920's Ramsay and de Finetti proved versions of the Dutch book
theorem where determination of probability distributions were considered
as dual problems to maximizing a payoff function. Later it was proved
that any consistent inference corresponds to optimizing with respect
to some payoff function. A more systematic study of scoring rules
was given by McCarthy \cite{McCarthy1956}. The basic result is that the
only strictly local proper scoring rule is logarithmic score. Our main theorem extends this result to general regret functions on convex sets. 

Thermodynamics is the study of concepts like heat, temperature and
energy. A major objective is to extract as much energy from a system
as possible. Concepts like entropy and free energy play a significant
role. The idea in statistical mechanics is to view the macroscopic
behavior of a thermodynamic system as a statistical consequence of
the interaction between a lot of microscopic components where the
interacting between the components are governed by very simple laws.
Here the central limit theorem and large deviation theory play a major
role. One of the main achievements is the formula for entropy as a
logarithm of a probability.

One of the main purposes of information theory is to compress data
so that data can be recovered exactly or approximately. One of the
most important quantities was called entropy because it is calculated
according to a formula that mimics the calculation of entropy in statistical
mechanics. Another key concept in information theory is information
divergence (KL-divergence) that was introduced by Kullback and Leibler
in 1951 in a paper entitled information and sufficiency. The link
from information theory back to statistical physics was developed
by E.T. Jaynes via the maximum entropy principle. The link back to
statistics is now well established \cite{Barron1998,Csiszar2004}. 

The relation between information theory and gambling was established
by Kelly \cite{Kelly1956}. Logarithmic terms appear because we are
interested in the exponent in an exponential growth rate of of our
wealth. Later Kelly's approach has been generalized to traiding of
stocks although the relation to information theory is weaker \cite{Cover1991}. 

Since related quantities appear in statistics, statistical mechanics, information
theory and finance, and we are interested in a general theory that describes
when these relations are exact and when they just work by analogy. 
We introduce some general concepts related to optimization on convex
sets. These concepts apply exactly to all the topics under consideration
and lead to Bregman divergences. Then we introduce a notion of
sufficiency and show that this leads to information divergence and
logarithmic score. This second step is not always applicable which
explains when the different topics are really different. For applications in thermodynamics and gambling this is described in \cite{Harremoes2015a} and \cite{Harremoes2016a}. In this paper we will see how general convex optimization lead to the notion of Bregman divergence. If optimization is combined with the notion of sufficiency the Bregman divergence is generated by a function that is proportional to the entropy. This result holds on any convex set but it also gives very severe ties on the shape of the convex set to an extend that leads almost to the Hilbert space formalism of quantum mechanics. 

Due to the limited space in the proceedings paper most of the proofs have been foreshortened or omitted.

\section{Improved Caratheodory theorem}

We consider a situation where our knowledge about a system is given by an element in a convex set. These elements are called \emph{states} and convex combinations are formed by probabilistic mixing. States that cannot be distinguished by any measurement are considered as being the same state. The extreme points in the convex set are called \emph{pure states} and all other states are called \emph{mixed states}. See \cite{Holevo1982} for details about this definition of a state space. In this exposition we will assume that the state space, i.e. the convex set is finite dimensional and compact. 

If $C$ is a state space it is sometimes convenient to consider the positive cone generated by $C$. The positive cone consist of elements of the form $\lambda \cdot s$ where $\lambda \geq 0$ and $s\in C$. Elements of a positive cone can multiplied by positive constants via $\lambda\cdot\left(\mu\cdot s\right)=\left(\lambda\cdot\mu\right)\cdot s$ and can be added as follows.
\[
\lambda \cdot s_{1} + \mu \cdot s_{2}=\left(\lambda+\mu\right)\cdot\left( \frac{\lambda}{\lambda+\mu}s_{1}+ \frac{\mu}{\lambda+\mu}s_{2} \right).
\]
 The convex set and the positive cone can be embedded in a real vector space by taking the affine hull of the cone and use the apex of the cone as origin of the vector space. 

Let $x$ be an element in the positive cone such that
\begin{equation}
x=\sum_{i=1}^{n}\lambda_{i}\cdot s_{i}.  \label{eq:SpecDecomp}
\end{equation}
where $s_i$ are pure states. If $\lambda_{1}\geq \lambda_{2}\geq\dots\geq \lambda_{n}$ then
the vector $\lambda_{1}^{n}$ is called the \emph{spectrum of the decomposition} and such spectra are ordered by majorization. Let $\lambda_{1}^{n}$ and $\mu_{1}^{n}$  be the spectra of two decompositions of the same positive element. 
Then $\lambda_{1}^{n}\succeq \mu_{1}^{n}$  if $\sum_{i=1}^{k}\lambda_{i}\geq\sum_{i=1}^{k}\mu_{i}$ for $k\leq n$. Note that for positive element in a general positive cone the majorization ordering is a partial ordering. In special cases like the cone of positive semidefinite matrices on a complex Hilbert space the decompositions of the matrix form a lattice ordering with a unique maximal element, but in general the set of decompositions may have several incompatible maximal elements.   
Note that there is no restriction on the number $n$ in the definition of the spectrum, so if two spectra have different length we will extend the shorter vector by concatenating zeros at the end.

\begin{definition}
Let $C$ denote a convex set. A \emph{test} is an affine map from $C$ to $%
\left[ 0,1\right] .$ Let $s_{0}$ and $s_{1}$ denote states in the state
space $C.$ Then $s_{0}$ and $s_{1}$ are said to be \emph{mutually singular}
if there exists test $\phi $ such that $\phi \left( s_{0}\right) =0$ and $%
\phi \left( s_{1}\right) =1.$ The states $s_{0}$ and $s_{1}$ are said to be 
\emph{orthogonal} if $s_{0}$ and $s_{1}$ are mutually singular in the
smallest face $F$ of $C$ that contain both $s_{0}$ and $s_{1}.$ If the the extreme points $s_{1},s_{2},\dots,s_{n}$ of a decomposition are orthogonal then
the decomposition is called an \emph{%
orthogonal decomposition}.
\end{definition}

The trace of a positive element is defined by $\mathrm{Tr}\left(\lambda\cdot s \right)=\lambda$ when $s\in C$ so that states are positive elements with trace equal to 1. We note that the trace restricted to the states defines a test. Any tests can be identified with a positive functional on the positive cone that is dominated by the trace. We note that for a decomposition like (\ref{eq:SpecDecomp}) the trace is given by $\mathrm{Tr}\left(x\right)=\sum_{i=1}^{n} \lambda_{i}$. 

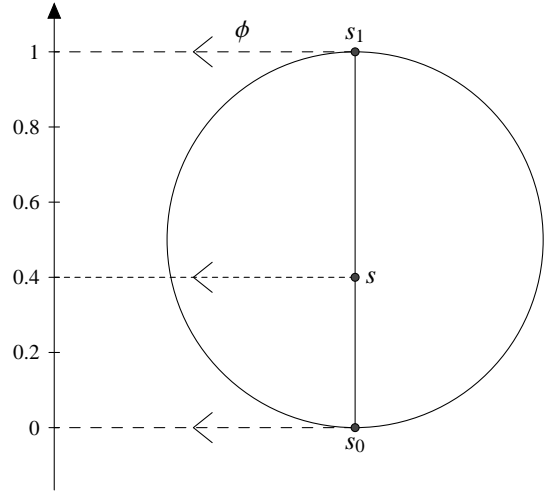
\begin{figure}
\centering
\begin{tikzpicture}[line cap=round,line join=round,>=triangle 45,x=5.0cm,y=5.0cm]
\draw[->,color=black] (0.,-0.16592592592592395) -- (0.,1.13185185185185);
\foreach \y in {0,0.2,0.4,0.6,0.8,1}
\draw[shift={(0,\y)},color=black] (2pt,0pt) -- (-2pt,0pt) node[left] {\footnotesize $\y$};

\clip(-0.17703703703703733,-0.16592592592592395) rectangle (1.363703703703703,1.13185185185185);
\draw(0.8,0.5) circle (2.5cm);
\draw [dash pattern=on 4pt off 4pt] (0.8,0.)-- (0.,0.);
\draw (0.37,0.) -- (0.42,0.04);
\draw (0.37,0.) -- (0.42,-0.04);
\draw [dash pattern=on 4pt off 4pt] (0.8,1.)-- (0.,1.);
\draw (0.37,1.) -- (0.42,1.04);
\draw (0.37,1.) -- (0.42,0.96);
\draw [dash pattern=on 2pt off 2pt] (0.8,0.4)-- (0.,0.4);
\draw (0.37,0.4) -- (0.42,0.44);
\draw (0.37,0.4) -- (0.42,0.36);
\draw (0.8,1.)-- (0.8,0.);
\draw (0.5,1) node[anchor=south] {$\phi$};
\draw (0.8,0) node[anchor=north] {$s_0$};
\draw (0.8,1) node[anchor=south] {$s_1$};
\draw (0.8,0.4) node[anchor= west] {$s$};
\begin{scriptsize}
\draw [fill=uuuuuu] (0.8,0.) circle (1.5pt);
\draw [fill=uuuuuu] (0.8,1.) circle (1.5pt);
\draw [fill=uuuuuu] (0.8,0.4) circle (1.5pt);
\end{scriptsize}
\end{tikzpicture}
\caption{In the disc the points $s_0$ and $s_1$ are mutually singular. The point $s$ has a unique decomposition into mutually singular points because it is not the center of the disc.}
\end{figure}

\begin{theorem}
Let $C$ denote a convex compact set of dimension $d$ and let $x$ denote
some element in the positive cone generated by $C$. Then there exists an orthogonal decomposition of the form as in Equation \ref{eq:SpecDecomp} such that $n\leq d+1$.
\end{theorem}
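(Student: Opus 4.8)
The classical Carath\'eodory theorem, applied to the cone $\mathrm{cone}(C)$ sitting in the $(d+1)$-dimensional affine hull described above, already gives a decomposition of $x$ into at most $d+1$ pure states; the entire content of the theorem is therefore to make such a decomposition \emph{orthogonal}. The plan is to prove this by induction on the dimension $d$ of $C$, peeling off at each step a single pure state $s_1$ that is orthogonal to everything that remains, with the remainder confined to a face of dimension at most $d-1$. Since each peeling drops the dimension by at least one, the process terminates after at most $d+1$ steps and the bound $n\le d+1$ falls out automatically. For the base case $d=0$ the set $C$ is a single point and $x=\lambda_1\cdot s_1$ with $n=1$. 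For the inductive step I first reduce to the normalized case: if $x=0$ there is nothing to prove, and otherwise I divide by $\mathrm{Tr}(x)$ and assume $x=s$ is a state, since any decomposition of $s$ rescales to one of $x$ with the same pure states and hence the same orthogonality relations. If $s$ is already pure we are done with $n=1$.

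The main step is a greedy "largest eigenvalue" peeling. I would set
\[
\lambda_1=\sup\{\lambda\ge 0 : x-\lambda\cdot t\in\mathrm{cone}(C)\ \text{for some pure state}\ t\},
\]
choose a pure state $s_1$ attaining the supremum, and put $x'=x-\lambda_1\cdot s_1\in\mathrm{cone}(C)$. Because $\lambda_1$ is maximal, $x'$ cannot be interior to the cone, so it lies on the boundary and admits a nonzero positive supporting functional; after rescaling I may take this to be a test $\phi$ with $\max_C\phi=1$ and $\phi(x')=0$. The key claim is that maximality of $\lambda_1$ forces $s_1$ into the maximal face $\{\phi=1\}$, that is, $\phi(s_1)=1$. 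Granting this, set $F'=\{s\in C:\phi(s)=0\}$. Since $\phi(s_1)=1\ne 0$ we have $s_1\notin F'$, so $F'$ is a proper exposed face and $\dim F'\le d-1$; moreover $x'$ lies in the cone generated by $F'$, because $\phi(\mu\cdot v)=\mu\,\phi(v)=0$ with $\phi\ge 0$ on $C$ forces $v\in F'$.

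Orthogonality of $s_1$ to the remainder is then automatic. For any pure state $t$ occurring in a decomposition of $x'$ we have $t\in F'$, and $\phi$ restricts to a test on the smallest face containing $s_1$ and $t$ that takes the value $1$ at $s_1$ and $0$ at $t$; hence $s_1$ and $t$ are mutually singular in that face, i.e.\ $s_1\perp t$. Applying the induction hypothesis to $x'$ inside the compact convex set $F'$ yields an orthogonal decomposition $x'=\sum_{i=2}^{n}\lambda_i\cdot s_i$ with $n-1\le\dim F'+1\le d$. The states $s_2,\dots,s_n$ are pairwise orthogonal by the induction hypothesis and each is orthogonal to $s_1$ by the preceding remark, so $x=\sum_{i=1}^{n}\lambda_i\cdot s_i$ is an orthogonal decomposition with $n\le d+1$, as required.

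The step I expect to be the main obstacle is exactly the pairing of $s_1$ with the supporting test, namely the identity $\phi(s_1)=1$ together with the attainment of the supremum at a genuine pure state. The extreme points of a general compact convex set need not form a closed set, so one must argue carefully that the supremum defining $\lambda_1$ is attained and that the optimizer may be taken extreme rather than merely a boundary state. Establishing that maximality of $\lambda_1$ pushes $s_1$ into the unit face $\{\phi=1\}$ of the supporting functional $\phi$ — the abstract analogue of the fact that, for density matrices, the supporting functional at $\rho-\lambda_1\cdot s_1$ is forced to be the projector onto the peeled eigenvector — is the geometric crux on which the whole orthogonality argument depends.
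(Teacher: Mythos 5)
You take a genuinely different route from the paper (which invokes Uhlmann's theorem to produce a decomposition into at most $d+1$ pure states that is maximal in the majorization ordering, and then argues that any maximal decomposition is pairwise orthogonal via a reduction to dimension two), and the skeleton of your induction --- peel one pure state, trap the remainder in a proper exposed face, propagate orthogonality through the test $\phi$ --- is sound. The problem is that the proof defers exactly its two load-bearing steps, and the first of them is not merely unproven but false: the supremum defining $\lambda_1$ need not be attained at any pure state. Consider the three-dimensional compact convex set
\[
C=\mathrm{conv}\bigl(D\cup\{a_+,a_-\}\bigr),\qquad D=\{(x,y,0):x^2+y^2\le 1\},\qquad a_\pm=(1,0,\pm 1).
\]
Its extreme points are $a_\pm$ together with the circle points $t_\theta=(\cos\theta,\sin\theta,0)$, $\theta\neq 0$; the point $(1,0,0)=\frac12 a_+ +\frac12 a_-$ is a limit of extreme points but is not extreme. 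Take the state $x=(1-\epsilon,0,0)$ with $0<\epsilon<1$. Since $x$ and $t_\theta$ lie in the plane $z=0$ and $C\cap\{z=0\}=D$, the largest admissible $\lambda$ for $t_\theta$ is computed inside the disc: $\lambda_{\max}(t_\theta)=\frac{\epsilon(2-\epsilon)}{2\left(1-(1-\epsilon)\cos\theta\right)}$, which increases to $1-\epsilon/2$ as $\theta\to 0$ but never reaches it, while $\lambda_{\max}(a_\pm)=\frac{2-\epsilon}{4}$ is smaller still. Hence $\lambda_1=1-\epsilon/2$ is attained by no pure state, and your first peeling step cannot be carried out.

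The example also shows the failure is structural, not a removable technicality: the orthogonal decomposition required by the theorem does exist, namely $x=\frac{2-\epsilon}{4}a_+ +\frac{2-\epsilon}{4}a_-+\frac{\epsilon}{2}\cdot(-1,0,0)$, but its largest weight $\frac{2-\epsilon}{4}$ is far below the greedy supremum $1-\epsilon/2$, and near-optimal pure states $t_\theta$ leave as remainder a circle point that is not antipodal to $t_\theta$ and hence not orthogonal to it. So maximizing the first coefficient actively steers away from orthogonal decompositions whenever the extreme points fail to form a closed set --- precisely the loophole you flagged but did not close. (In fairness, the same example strains the paper's own argument: every decomposition of this $x$ into extreme points is strictly majorized by the two-point decomposition attached to $t_\theta$ for small $\theta$, so a majorization-maximal decomposition into extreme points does not exist here either; some regularity hypothesis appears to be needed in both approaches.) Finally, even when the supremum is attained, your central claim that maximality forces $\phi(s_1)=1$ for some supporting test $\phi$ at $x'$ --- in effect that the supporting hyperplanes at $s_1$ and at the remainder can be chosen parallel, which is the correct first-order optimality condition --- is asserted rather than proved, so as written the proposal establishes the theorem only modulo its two hardest steps.
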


\begin{proof}
Without loss of generality we may assume that $x$ has trace 1 so that $x$ equals a state $s\in C$. We have to prove that there exists a probability vector $p_{i}^{n}$ and an orthogonal decomposition
\[
s=\sum_{i=1}^{n}p_{i}\cdot s_{i}.
\]
From the proof of Theorem 2 in \cite{Uhlmann1970} it follows that there exists a decomposition into at most $d+1$ extreme points that is maximal in the majorization ordering. Therefore assume that the decomposition is maximal. 
We will show that the states $s_{i}$ and $s_{j}$ are orthogonal.
Without loss of generality we may assume that $i=1$ and $j=2$. Now
\[
s  =\sum_{i=1}^{n}p_{i}\cdot s_{i}\\
  =\left(p_{1}+p_{2}\right)\left(\frac{p_{1}}{p_{1}+p_{2}}\cdot s_{0}+\frac{p_{2}}{p_{1}+p_{2}}\cdot s_{2}\right)+\sum_{i=3}^{n}p_{i}\cdot s_{i}\,.
\]
We will prove that $s_{0}$ and $s_{1}$ are singular in the smallest
face containing $s$. Without loss of generality we may assume that
$p_{0}+p_{1}=1$ and that $s$ is an algebraically interior point.

The proof is by induction on the dimension $d$. If $d=1$ the result is trivial.
Assume that the theorem has been proved for $d<n$ and that $C$ has
dimension $n$. Let $\tilde{C}$ be the intersection of $C$ and a
hyperplane through $s_{0}$ and $s_{1}.$ Then there exists a function
$\psi:\tilde{C}\to\left[0,1\right]$ such that $\psi\left(s_{i}\right)=i$.
Let $\ell$ denote the subset of the hyperplane where $\psi\left(x\right)=0.$
Then $C$ can be projected into a 2-dimensional vector space along
$\ell$. Therefore we just have to prove the result for $d=2.$

Introduce a coordinate system so that $s_{0}=\left(0,0\right)$ and
$s_{1}=\left(1,0\right)$ so that $s=\left(p_{1},0\right).$ Assume
$p_{1}<\nicefrac{1}{2}.$ Assume that $\left(0,1\right)$ is tangential
to $C$ in $s_{0}$ and that $\left(\alpha,-1\right)$ is tangential
to $C$ in $s_{1}$. Assume that $s=\left(1-p\right)\left(0,t_{1}\right)+p\left(1+t_{2}\alpha,t_{2}\left(-1\right)\right)$
where $t_{1},t_{2}>0.$ Then $p_{1}=p\left(1+\alpha t_{2}\right)$
so that $p\leq p_{1}$ if and only if $\alpha\leq0$.
\end{proof}

\begin{example}
In the unit square with $\left( 0,0\right) ,\left( 1,0\right) ,\left(
0,1\right) $ and $\left( 1,1\right) $ as vertices the point, with coordinates 
$\left( \nicefrac{1}{2},\nicefrac{1}{4}\right)$ an orthogonal decomposition with spectrum $\left( \nicefrac{1}{2},\nicefrac{1}{4},\nicefrac{1}{4}\right)$. This spectrum majorizes the spectrum of any other decomposition of this point, and it also majorizes the spectrum of any other point in the square. The square has in total four points symmetrically arranged with the same spectrum as $\left( \nicefrac{1}{2},\nicefrac{1}{4}\right)$.
\end{example}

Orthogonal decompositions are only unique when the convex set is a simplex. Nevertheless there exists a type of convex sets where a weaker type uniqueness holds.

\begin{definition}
If all orthogonal decompositions of a state have the same spectrum then the common
spectrum is called the \emph{spectrum of the state} and the state is said to
be \emph{spectral}. We say that the convex compact set $C$ is \emph{spectral}
if all states in $C$ are spectral. The rank of a spectral set is the maximal number of orthogonal states needed in an orthogonal decomposition of a state.
\end{definition}

A spectral set of entropic rank 2 is balanced, i.e. it is symmetric around a central point and
all boundary points are extreme. Two states in the set are orthogonal if and
only if they are antipodal. Any state can be decomposed into two antipodal
states. If the state is not the center of the balanced set this is the only
orthogonal decomposition. The center can be decomposed into a 
$\nicefrac{1}{2}$ and $\nicefrac{1}{2}$ mixture of any pair of antipodal
points.

\begin{proposition}
In two dimensions a simplex and a balanced set are the only spectral sets.
\end{proposition}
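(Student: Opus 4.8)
The plan is to organize the proof around the rank of $C$. First I would note that, because $C$ is spectral, all orthogonal decompositions of a state have the same spectrum and hence the same number of nonzero weights; so the number of terms in an orthogonal decomposition is an invariant of the state and the rank is well defined. By the improved Carathéodory theorem there is always an orthogonal decomposition with at most $d+1=3$ terms, and since every orthogonal decomposition of a given state carries the same padded spectrum, every orthogonal decomposition in fact has at most $3$ terms; thus the rank is at most $3$. A rank-$1$ set is a single point, so for a two-dimensional set the rank is either $2$ or $3$. If the rank is $2$, then $C$ is balanced by the discussion preceding the proposition. It therefore remains to treat the rank-$3$ case and show that such a $C$ is a triangle.

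Assume the rank is $3$ and pick a state with a three-term orthogonal decomposition, giving pairwise orthogonal pure states $s_1,s_2,s_3$. Three distinct extreme points can never be collinear, since the middle one would be a proper convex combination of the other two and hence not extreme; so $s_1,s_2,s_3$ are affinely independent and $T=\mathrm{conv}(s_1,s_2,s_3)$ is a genuine triangle contained in $C$, with the same affine hull as $C$. Let $\ell_1,\ell_2,\ell_3$ denote the barycentric coordinate functionals on this affine plane, so that $\ell_i(s_j)=\delta_{ij}$, $\sum_i\ell_i\equiv 1$, and each $\ell_i$ is affine. Then $T=\{x:\ell_i(x)\ge 0\text{ for all }i\}$, and since an affine functional attains its minimum over $C$ at an extreme point, the whole problem reduces to showing $\ell_i\ge 0$ on $C$ for every $i$; equivalently $C\subseteq T$, whence $C=T$.

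The crux is therefore to show that $C$ cannot poke out past an edge line of $T$ without violating spectrality. Suppose, for contradiction, that some extreme point $e$ has $\ell_3(e)<0$, so that $e$ lies strictly beyond the line through $s_1$ and $s_2$, on the side opposite $s_3$. Then the quadrilateral $\mathrm{conv}(s_1,s_3,s_2,e)$ is contained in $C$, and its two diagonals $[s_1,s_2]$ and $[s_3,e]$ cross at an interior point $m$. The diagonal $[s_1,s_2]$ realizes an orthogonal two-term decomposition of $m$, so by spectrality the spectrum of $m$ has exactly two nonzero entries, namely the two ratios in which $m$ divides $[s_1,s_2]$. I would now argue that the other diagonal produces a competing orthogonal decomposition of $m$, or that a point on the interior segment from $m$ to $s_3$ carries simultaneously the three-term barycentric orthogonal decomposition inherited from $T$ and a two-term orthogonal decomposition, so that two orthogonal decompositions of one state have different spectra. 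This is exactly the obstruction exhibited by the square in the Example, whose centre admits both a two-term diagonal decomposition and a four-term corner decomposition.

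The step I expect to be the main obstacle is precisely the certification that the second, competing decomposition is orthogonal — that is, manufacturing a genuine second orthogonal frame out of the poking extreme point $e$. Pairwise orthogonality of $s_1,s_2,s_3$ does not by itself make $e$ orthogonal to $s_3$, so one must use the supporting line (test) $\phi$ realizing $s_1\perp s_2$, together with the fact that on a spectral set the number of terms of an orthogonal decomposition is fixed by the state, to locate an orthogonal pair whose connecting segment passes through the interior of $T$. Once such a competing orthogonal decomposition is in hand the spectra visibly disagree and the contradiction closes, forcing $C\subseteq T$ and hence $C=T$; combined with the rank-$2$ case this shows that in two dimensions only a simplex and a balanced set are spectral.
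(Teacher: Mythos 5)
Your scaffolding is sound, and since the paper omits its own proof of this proposition there is nothing to compare against line by line; the assessment below is on the merits. The rank is indeed well defined on a spectral set and is at most $d+1=3$ by the improved Carath\'eodory theorem; the rank-$2$ case is exactly the assertion in the discussion preceding the proposition (which the paper also leaves unproved, so quoting it is consistent with the paper's structure); and in the rank-$3$ case the three orthogonal pure states $s_1,s_2,s_3$ are affinely independent, so it suffices to rule out an extreme point $e$ with $\ell_3(e)<0$. Even the geometric step you glide over is correct: $s_1,s_2,s_3,e$ are extreme points of $C$, hence in convex position, and a diagonal pair $[s_1,s_3]$, $[s_2,e]$ (or $[s_1,e]$, $[s_2,s_3]$) cannot cross because those segments lie in opposite closed half-planes determined by the line through $s_1,s_2$; so the crossing diagonals must be $[s_1,s_2]$ and $[s_3,e]$, meeting at a point $m$ interior to both segments and interior to $C$.

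The genuine gap is exactly where you flag it, and it is not a deferrable technicality --- it is the entire content of the proposition. Spectrality constrains only \emph{orthogonal} decompositions, so the two-term decomposition of $m$ (or of a nearby point $x\in\mathrm{int}(T)\cap(s_3,e)$) along the segment $[s_3,e]$ is irrelevant unless you certify $s_3\perp e$. Since $[s_3,e]$ meets the interior of $C$, the smallest face containing $s_3$ and $e$ is $C$ itself, so $s_3\perp e$ demands a globally affine test $\phi:C\to[0,1]$ with $\phi(e)=0$ and $\phi(s_3)=1$; equivalently, $C$ must fit between two parallel supporting lines, one touching at $e$ and one at $s_3$. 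Nothing you have assumed produces such a test: the tests witnessing $s_1\perp s_2$, $s_1\perp s_3$, $s_2\perp s_3$ say nothing about $e$, and the position hypothesis $\ell_3(e)<0$ is compatible with $e$ failing to be antipodal to $s_3$ in this sense. The disc shows why the certification cannot be waved at: every interior point of the disc has a two-parameter family of two-term decompositions into boundary points, yet only the diametral one is orthogonal, which is precisely why the disc \emph{is} spectral despite the abundance of competing decompositions. As written, your argument shows only that a non-triangular rank-$3$ set has many non-orthogonal decompositions, which contradicts nothing; the missing step --- manufacturing an orthogonal pair out of the protruding extreme point, or finding some other violation of spectrality --- is the theorem itself.
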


\section{Local Bregman divergences}

The entropy of an element $x$ of the positive cone can be defined as 
\[
H\left(x\right)=\inf\left(-\sum_{i=1}^{n}\lambda_{i}\ln\left(\lambda_{i}\right)\right)
\]
where the infimum is taken over all spectra of $x$. Since entropy is decreasing under majorization the entropy of $x$ is attained at an orthogonal decomposition. This definition extends a similar definition of the entropy of a state as defined by Uhlmann \cite{Uhlmann1970}. In general this definition of entropy does not give a concave function on the convex set. For instance the entropy of points in a square has four local maxima corresponding to the four points with maximal spectrum in the majorization ordering.

Let $\mathcal{A}$ denote a subset of the feasible measurements such
that $a\in \mathcal{A}$ maps $C$ into a distribution on the real numbers i.e.
the distribution of a random variable. The elements of $\mathcal{A}$ may represent feasible
\emph{actions} (decisions) that lead to a payoff like the score of
a statistical decision, the energy extracted by a certain interaction
with the system, (minus) the length of a codeword of the next encoded
input letter using a specific code book, or the revenue of using a
certain portfolio. For each $s\in C$ we define
\[
\left\langle a,s\right\rangle =E\left[a\left(s\right)\right].
\]
and
\[
F\left(s\right)=\sup_{a\in\mathcal{A}}\left\langle a,s\right\rangle .
\]
Without loss of generality we may assume that the set of actions $\mathcal{A}$
is closed so that we may assume that there exists $a\in\mathcal{A}$
such that $F\left(s\right)=\left\langle a,s\right\rangle $ and in
this case we say that $a$ is optimal for $s.$ We note that $F$
is convex but $F$ need not be strictly convex. 

\begin{definition}
If $F\left(s\right)$ is finite then we define \emph{the regret} of the action $a$
by 
\[
D_{F}\left(s,a\right)=F\left(s\right)-\left\langle a,s\right\rangle .
\]
\end{definition}

\begin{figure}
\centering
\begin{tikzpicture}[line cap=round,line join=round,>=triangle 45,x=10.0cm,y=10.0cm]
\draw[->,color=black] (0.,0.) -- (1.1419478737997253,0.);
\draw[->,color=black] (0.,-0.5) -- (0.,0.15580246913580265);
\clip(-0.06431,-0.5008779149519884) rectangle (1.1419478737997253,0.15580246913580265);
\draw[line width=1.2pt,color=qqwuqq,smooth,samples=100,domain=0:1.1419478737997253] plot(\x,{(\x)*ln((\x)+1.0E-4)});
\draw (0.3868, -0.4994) -- (1.1362, 0.0827);
\draw [dash pattern=on 4pt off 4pt] (0.8,0.)-- (0.8,-0.178414847300847);
\draw [line width=2.pt] (0.45,-0.450314607074793)-- (0.45,-0.35922847440746253);
\draw [dash pattern=on 4pt off 4pt] (0.45,0.)-- (0.45,-0.35922847440746253);
\draw (0.4343468211923174,0.07152263374485619) node[anchor=north west] {$s_0$};
\draw (0.790781098312178,0.0680109739369001) node[anchor=north west] {$s_1$};
\draw (0.27983343997779647,-0.38) node[anchor=north west] {$D_{F}(s_0 ,s_1)$};
\draw (1.0330861733985859,0.1540466392318246) node[anchor=north west] {$F$};
\begin{scriptsize}
\draw [fill=uuuuuu] (0.8,-0.178414847300847) circle (1.5pt);
\draw [fill=uuuuuu] (0.45,-0.35922847440746253) circle (1.5pt);
\draw [fill=uuuuuu] (0.45,-0.450314607074793) circle (1.5pt);
\draw [fill=uuuuuu] (0.45,0.) circle (1.5pt);
\draw [fill=uuuuuu] (0.8,0.) circle (1.5pt);
\end{scriptsize}
\end{tikzpicture}
\label{breg}
\caption{The regret equals the vertical distance  between the curve and the tangent.}
\end{figure}
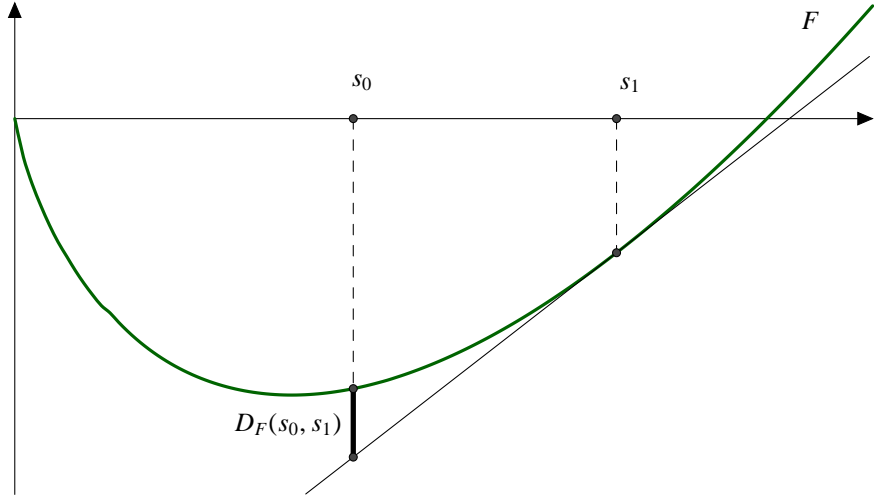

If the state is $s_{1}$ but one acts as if the state were $s_{2}$
one suffers a regret that equals the difference between what one achieves
and what could have been achieved. 
\begin{definition}\label{def:regret}
If $F\left(s_{1}\right)$ is finite then we define \emph{the regret of the state
$s_{2}$} as 
\[
D_{F}\left(s_{1},s_{2}\right)=\inf_{a}D_{F}\left(s_{1},a\right)
\]
where the infimum is taken over actions $a$ that are optimal for
$s_{2}.$ 
\end{definition}

The notion of sufficiency for Bregman divergences have been introduced in \cite{Harremoes2007a} and \cite{Jiao2014}. It was shown in \cite{Jiao2014} that a Bregman divergence on the simplex of distributions on an alphabet that is not binary determines the divergence except for a multiplicative factor.

\begin{definition}
Let $C$ denote a convex set and let $\Phi :C\rightarrow C$ denote
some affine map. Then $\Phi $ $is$ said to be sufficient for the family of
states $s_{\theta }$ if there exists an affine transformation $\Psi
:C\rightarrow C$ such that $\Psi \left( \Phi \left( s_{\theta }\right)
\right) =s_{\theta }.$ Let $D_F$ denote a regret function defined according to Definition \ref{def:regret}. Then $D_F$ is said
to satisfy sufficiency if 
\[
D_{F}\left( \Phi \left( s_{1}\right) \Vert \Phi \left( s_{2}\right) \right)
=D_{F}\left( s_{1}\Vert s_{2}\right) 
\]%
for any states $s_{1},s_{2}\in C$ and any affine transformation $\Phi
:C\rightarrow C$ that is sufficient for $s_{1},s_{2}.$
\end{definition}

Recently it has been proved that divergence on a complex Hilbert space is
decreasing under positive trace preserving maps \cite{Mueller-Hermes2015, Christandl2016}. Therefore information divergence satisfies the
sufficiency condition on complex Hilbert spaces. Hence sufficiency is also
satisfied on real Hilbert spaces. It it not known if sufficiency holds on
more general Jordan algebras so we introduce a weaker condition called
locality.

\begin{definition}
The regret function $D_{F}$ is said to be local if 
\[
D_{F}\left( \left( 1-t\right) s_{0}+ts_{1}\Vert s_{0}\right) =D_{F}\left( \left(
1-t\right) s_{0}+ts_{2}\Vert s_{0}\right) 
\]%
when $s_{1}$ and $s_{2}$ are states that are orthogonal to $s_{0}.$ 
\end{definition}

\begin{proposition}
Let $C$ denote a spectral convex set. Then the Bregman divergence generated
by the entropy is local.
\end{proposition}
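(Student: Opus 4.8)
The plan is to collapse the locality identity onto one structural computation, after which the only $s_{1}$-dependent quantity is seen to be the same in every orthogonal direction. I take the generator to be $F=-H$, the convex function whose optimal actions are the logarithmic scores, so that $D_{F}$ is the regret from Definition \ref{def:regret}. First I would use orthogonality to decompose the mixture $m=\left(1-t\right)s_{0}+ts_{1}$. Fixing orthogonal decompositions $s_{0}=\sum_{i}p_{i}\cdot e_{i}$ and $s_{1}=\sum_{j}q_{j}\cdot f_{j}$ with all weights positive, let $\phi$ be a test on the smallest face containing $s_{0}$ and $s_{1}$ with $\phi\left(s_{0}\right)=0$ and $\phi\left(s_{1}\right)=1$. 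Positivity of the weights forces $\phi\left(e_{i}\right)=0$ for every $i$ and $\phi\left(f_{j}\right)=1$ for every $j$, so each $e_{i}$ is orthogonal to each $f_{j}$, and
\[
m=\sum_{i}\left(1-t\right)p_{i}\cdot e_{i}+\sum_{j}tq_{j}\cdot f_{j}
\]
is itself an orthogonal decomposition. Because $C$ is spectral, the spectrum of $m$ is the sorted concatenation of $\left\{ \left(1-t\right)p_{i}\right\} $ and $\left\{ tq_{j}\right\} $, and its entropy is determined by these numbers alone.

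Next I would read off the entropy of $m$ from this spectrum. Using $\sum_{i}p_{i}=\sum_{j}q_{j}=1$, expanding $-\sum_{i}\left(1-t\right)p_{i}\ln\left(\left(1-t\right)p_{i}\right)-\sum_{j}tq_{j}\ln\left(tq_{j}\right)$ gives
\[
H\left(m\right)=\left(1-t\right)H\left(s_{0}\right)+tH\left(s_{1}\right)-\left(1-t\right)\ln\left(1-t\right)-t\ln t,
\]
so the entropy is affine along orthogonal mixtures up to the binary entropy correction. Writing the regret over actions $a$ optimal for $s_{0}$ as $D_{F}\left(m\Vert s_{0}\right)=F\left(m\right)-\sup_{a}\left\langle a,m\right\rangle $, and using $\left\langle a,s_{0}\right\rangle =F\left(s_{0}\right)$ for every such $a$, the supremum splits as $\left(1-t\right)F\left(s_{0}\right)+t\sup_{a}\left\langle a,s_{1}\right\rangle $. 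Since $\sup_{a}\left\langle a,s_{1}\right\rangle =F\left(s_{1}\right)-D_{F}\left(s_{1}\Vert s_{0}\right)$, combining this with the entropy identity and $F=-H$ collapses everything to
\[
D_{F}\left(m\Vert s_{0}\right)=\left(1-t\right)\ln\left(1-t\right)+t\ln t+t\,D_{F}\left(s_{1}\Vert s_{0}\right).
\]
As the first two terms depend only on $t$, locality reduces to showing that $D_{F}\left(s_{1}\Vert s_{0}\right)$ does not depend on the direction $s_{1}$ among states orthogonal to $s_{0}$.

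Finally I would show that this residual regret is the same in every orthogonal direction. Fix any $s_{1}\perp s_{0}$ and any action $a$ optimal for $s_{0}$, normalised so that $F\left(s_{0}\right)=0$. The entropy identity along the segment gives
\[
F\left(\left(1-\epsilon\right)s_{0}+\epsilon s_{1}\right)=\epsilon F\left(s_{1}\right)+\left(1-\epsilon\right)\ln\left(1-\epsilon\right)+\epsilon\ln\epsilon,
\]
whose right derivative at $\epsilon=0$ is $-\infty$ owing to the $\epsilon\ln\epsilon$ term. Since $\left\langle a,\cdot\right\rangle \leq F$ with equality at $s_{0}$, dividing the inequality $\epsilon\left\langle a,s_{1}\right\rangle \leq F\left(\left(1-\epsilon\right)s_{0}+\epsilon s_{1}\right)$ by $\epsilon$ and letting $\epsilon\to0^{+}$ forces $\left\langle a,s_{1}\right\rangle =-\infty$. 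As this holds for every optimal $a$, we get $\sup_{a}\left\langle a,s_{1}\right\rangle =-\infty$ and hence $D_{F}\left(s_{1}\Vert s_{0}\right)=+\infty$, a value visibly independent of the chosen direction. Substituting into the displayed formula, both sides of the locality equation equal $+\infty$, so $D_{F}$ is local.

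The hard part will be the last step rather than the algebra: I must be certain that no optimal action for the non-smooth point $s_{0}$ escapes the bound, i.e.\ that the infinite downward slope of the entropy in every orthogonal direction --- coming entirely from the spectral term $\lambda\ln\lambda$ --- really pins $\left\langle a,s_{1}\right\rangle $ to $-\infty$ for all optimal $a$ at once. This is precisely where spectrality is indispensable, since it guarantees that the emergent components $f_{j}$ enter with weights $\epsilon q_{j}$ starting from $0$. A secondary point is the first step in the abstract setting, where one cannot invoke Hilbert-space supports; but the observation that $\phi\left(s_{0}\right)=0$ forces $\phi\left(e_{i}\right)=0$ and $\phi\left(s_{1}\right)=1$ forces $\phi\left(f_{j}\right)=1$, purely by positivity of the weights, supplies the required mutual orthogonality directly from the definition of a test.
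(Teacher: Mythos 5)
Your structural work is sound, and in places it is more careful than the paper itself: the test argument showing that each component $e_{i}$ of $s_{0}$ is orthogonal to each component $f_{j}$ of $s_{1}$, hence that the combined decomposition of $m=\left(1-t\right)s_{0}+ts_{1}$ is orthogonal, and the resulting identity $H\left(m\right)=\left(1-t\right)H\left(s_{0}\right)+tH\left(s_{1}\right)-\left(1-t\right)\ln\left(1-t\right)-t\ln t$, are correct and supply details the paper's foreshortened proof glosses over. The genuine problem is the target you aim these tools at. You evaluate the regret in the argument order of the printed definition, $D_{F}\left(m\Vert s_{0}\right)$, and conclude that both sides of the locality identity are $+\infty$. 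That "$\infty=\infty$" ending is a red flag, not a proof of the intended proposition: the printed definition of locality has its two arguments interchanged (a typo). The paper's own proof computes the opposite order, $D_{H}\left(s_{0},s\right)$ --- the regret of $s_{0}$ against actions optimal for the mixture $s$ --- and obtains the finite value $\ln\frac{1}{1-p}$; and the main theorem uses locality precisely in that order, via the identity $d\left(s_{i},Q\right)=f_{i}\left(q_{i}\right)$, which would be meaningless if the regrets involved were identically infinite. Worse, under your literal reading the locality property would also hold for $F=-H+G$ with $G$ any finite smooth convex function (the entropic boundary blow-up alone forces $\left\langle a,s_{1}\right\rangle =-\infty$ for every action optimal for $s_{0}$, by exactly your argument), so locality in that order cannot single out the entropy and the subsequent main theorem would be false. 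A statement whose proof is that both sides are infinite carries no content, and it is not the statement the paper proves or later uses.

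The fix is within reach of the machinery you already built: compute the regret with the tangent taken at the mixture rather than at $s_{0}$. Since $m$ lies in the relative interior of the face spanned by $\left\{ e_{i}\right\} \cup\left\{ f_{j}\right\} $, where by spectrality the entropy is the classical entropy of the spectrum $\left(\left(1-t\right)p_{i},\,tq_{j}\right)$, an optimal action for $m$ restricted to that face is the corresponding log-score, and
\[
D_{F}\left(s_{0}\Vert m\right)=-H\left(s_{0}\right)-\sum_{i}p_{i}\ln\left(\left(1-t\right)p_{i}\right)=\sum_{i}p_{i}\ln\frac{p_{i}}{\left(1-t\right)p_{i}}=\ln\frac{1}{1-t},
\]
which depends only on $t$ and not on the orthogonal direction $s_{1}$; this finite identity is the content of the proposition and is exactly the paper's computation. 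Note also that your closing worry about controlling all optimal actions at the non-smooth point $s_{0}$ evaporates once the order is corrected, because the supporting hyperplane is now taken at $m$, where the entropy is differentiable along the relevant face.
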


\begin{proof}
Assume that $s=\left(1-p\right)s_{0}+ps_{1}$ where $s_{0}$ and $s_{1}$
are orthogonal. Then one can make orthogonal decompositions 
\[
s_{0}  =\sum p_{0i}\cdot s_{0i}\, \mathrm{ and } \, s_{1}  =\sum p_{1j}\cdot s_{1j} .
\]
Then
\[
D_{H}\left(s_{0},s\right)  = \sum p_{0i}\cdot\ln\frac{p_{0i}}{\left(1-p\right)p_{0i}} = \sum p_{0i}\cdot\ln\frac{1}{1-p} =\ln\frac{1}{1-p},
\]
which does not depend on $s_{1}$ as long as $s_{1}$ is orthogonal
to $s_{0}$.\end{proof}

\begin{proposition}
If the regret function $D_{F}$ on a convex set satisfies the sufficiency
condition then it is local.
\end{proposition}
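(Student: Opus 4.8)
The plan is to obtain locality as a special instance of sufficiency. Given orthogonal states $s_1,s_2$ orthogonal to $s_0$ and a fixed $t$, write $s=(1-t)s_0+ts_1$ and $s'=(1-t)s_0+ts_2$. If I can produce an affine map $\Phi:C\to C$ that is sufficient for the pair $\{s,s_0\}$ and satisfies $\Phi(s_0)=s_0$ and $\Phi(s)=s'$, then the sufficiency condition applied to this pair yields $D_F(s'\Vert s_0)=D_F(\Phi(s)\Vert\Phi(s_0))=D_F(s\Vert s_0)$, which is precisely the locality identity. So the whole proposition reduces to exhibiting one sufficient transformation that carries the mixture toward $s_1$ onto the mixture toward $s_2$ while fixing $s_0$.

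First I would build $\Phi$ from the orthogonality hypothesis. Since $s_1$ is orthogonal to $s_0$ there is a test $\phi$ with $\phi(s_0)=0$ and $\phi(s_1)=1$, and I set $\Phi(x)=(1-\phi(x))s_0+\phi(x)s_2$. This is affine in $x$, it fixes $s_0$, and because $\phi(s)=(1-t)\phi(s_0)+t\phi(s_1)=t$ it sends $s$ to $s'$. Symmetrically, using a test $\psi$ with $\psi(s_0)=0$ and $\psi(s_2)=1$ (available because $s_2$ is orthogonal to $s_0$), I define $\Psi(x)=(1-\psi(x))s_0+\psi(x)s_1$, which fixes $s_0$ and, since $\psi(s')=t$, sends $s'$ back to $s$.

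It then remains to verify that $\Phi$ is sufficient for $\{s,s_0\}$ with $\Psi$ as the recovery map, and this is immediate: $\Psi(\Phi(s_0))=\Psi(s_0)=s_0$ and $\Psi(\Phi(s))=\Psi(s')=s$, so $\Psi\circ\Phi$ is the identity on the two states $s$ and $s_0$, exactly as the definition of sufficiency requires. Invoking the sufficiency condition then closes the argument with the chain displayed above.

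The step I expect to be the main obstacle is guaranteeing that $\phi$ and $\psi$ can be chosen as genuine tests on all of $C$, i.e. affine maps into $[0,1]$ on the whole set, so that $\Phi$ and $\Psi$ really map $C$ into $C$ rather than merely a subset. Orthogonality only delivers such a test on the smallest face containing the relevant pair, and one must argue that this affine functional extends to $C$ without leaving $[0,1]$; equivalently, that the faces on which the functional takes the values $0$ and $1$ lie on parallel supporting hyperplanes of $C$ with the body between them. Carrying out this extension, or alternatively restricting the identity to the face where $D_F$ itself restricts to a regret function, is where the real work lies; the algebraic identities verifying $\Phi(s)=s'$ and $\Psi(s')=s$ are routine.
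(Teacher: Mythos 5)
Your reduction of locality to sufficiency via the collapse map $\Phi(x)=(1-\phi(x))\,s_0+\phi(x)\,s_2$ is the natural argument (the paper omits its own proof of this proposition, so there is nothing to compare it against), and the algebraic verifications you call routine are indeed correct. The genuine problem is the step you defer to the end: extending the tests from the face to all of $C$ is not merely ``where the real work lies'' --- it is impossible in general. Take the trapezoid $C=\mathrm{conv}\left\{ (0,0),(1,0),(-1,1),(2,1)\right\} $, with $s_0=(0,0)$ and $s_1=(1,0)$. The bottom edge is a face of $C$, and on it $\phi(x,0)=x$ is a test with $\phi(s_0)=0$ and $\phi(s_1)=1$, so $s_0$ and $s_1$ are orthogonal in the paper's sense. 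But any affine function on $C$ with these two values must have the form $\phi(x,y)=x+by$, and the requirements $\phi(-1,1)=b-1\in\left[0,1\right]$ and $\phi(2,1)=b+2\in\left[0,1\right]$ are contradictory: no test on $C$ separates $s_0$ from $s_1$. So orthogonality does not deliver a globally defined test, and your $\Phi$ cannot be constructed.

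Worse, the gap cannot be closed by a cleverer choice of sufficient map. In the same trapezoid take $s_2=(-1,1)$, which is orthogonal to $s_0$ (even globally, via the test $(x,y)\mapsto y$). Any affine map $C\to C$ fixing $s_0$ is linear (since $s_0$ is the origin), and if it sends $s=(1-t)s_0+ts_1=(t,0)$ to $s'=(1-t)s_0+ts_2=(-t,t)$ then its matrix has first column $(-1,1)^{T}$; writing $(b,d)^{T}$ for the second column, the images of the vertices $(-1,1)$ and $(2,1)$ are $(1+b,-1+d)$ and $(-2+b,2+d)$, which lie in $C$ only if $d\geq 1$ and $d\leq -1$. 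Hence no affine self-map of $C$ fixes $s_0$ and carries $s$ to $s'$; and since such a map is needed either as the forward map or as the recovery map, no sufficient transformation relates the pair $\left\{ s,s_0\right\} $ to the pair $\left\{ s',s_0\right\} $ at all. The sufficiency hypothesis is therefore simply silent about the identity you must prove, so your strategy cannot establish the proposition for an arbitrary convex set. What your argument does prove, completely and correctly, is the proposition on state spaces where orthogonal states admit a globally defined separating test (simplices, balls, density matrices, and the spectral sets the paper is ultimately concerned with); for the statement as literally given one needs either that stronger reading of orthogonality, additional hypotheses on $C$, or a different argument.
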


\begin{proposition}
Let $C$ denote a spectral convex set of entropic rank 2. Then the convex set
is balanced and any Bregman divergence is local.
\end{proposition}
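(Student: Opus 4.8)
The plan is to prove the two assertions in turn, and the whole argument hinges on one structural fact: in a spectral set of rank $2$ every pure state $e$ has a \emph{unique} orthogonal pure state $e'$, its antipode. I would extract this from the rank hypothesis by a ``three orthogonal states'' obstruction. Take a test $\phi$ with $\phi(e)=1$ and $\min_{C}\phi=0$, and let $F_e=\{s\in C:\phi(s)=0\}$ be the opposite face. If $\dim F_e\ge 1$, then $F_e$ is itself a spectral set of rank $\le 2$ and hence contains an orthogonal pair $f,f'$; but $\phi$ shows $e\perp f$ and $e\perp f'$, while $f\perp f'$ holds already inside $F_e$, so $e,f,f'$ are three mutually orthogonal pure states and their average has a three-term orthogonal decomposition, contradicting rank $2$. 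Thus $F_e$ is a single point, which gives existence and uniqueness of $e'$ and shows that the \emph{only} state orthogonal to a pure $e$ is $e'$ --- a genuinely mixed (algebraically interior) state admits no orthogonal partner at all, since a nonconstant test cannot attain its maximum value $1$ at an interior point.

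Next I would establish central symmetry. For a pure $e$ set $c=\nicefrac{1}{2}(e+e')$; this is an entropy-maximising state, with spectrum $\left(\nicefrac{1}{2},\nicefrac{1}{2}\right)$. The same obstruction shows $c$ is interior: were a nonconstant test $\phi$ to vanish at $c$ it would vanish at both $e$ and $e'$, and any pure $g$ with $\phi(g)=1$ would be orthogonal to $e$ and to $e'$, again producing three mutually orthogonal pure states. Granting that this central state is unique (independent of the chosen antipodal pair), the point reflection $s\mapsto 2c-s$ sends each pure $e$ to $e'$ and each mixed $pe+(1-p)e'$ to $(1-p)e+pe'$, hence maps $C$ onto $C$; this is exactly the statement that $C$ is symmetric about $c$.

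The step I expect to be the main obstacle is precisely the \emph{uniqueness of the centre}, i.e. that the set of states with spectrum $\left(\nicefrac{1}{2},\nicefrac{1}{2}\right)$ is a single point. I would attack it by reducing to two dimensions by projection along a line, in the spirit of the proof of the improved Carath\'eodory Theorem, so that a two-dimensional reduction containing two competing centres is again a spectral set of rank $\le 2$; by the preceding Proposition such a set is a simplex or a balanced set, and a triangle is excluded because its three vertices would be mutually orthogonal, contradicting rank $2$. The slice is therefore balanced and has a unique centre, forcing the two candidates to coincide; the delicate point to verify here is that the reduction genuinely preserves spectrality. The same classification rules out any boundary segment: a proper face of dimension $\ge 1$ would be a balanced subset whose own centre, lying in its relative interior, would be a $\left(\nicefrac{1}{2},\nicefrac{1}{2}\right)$-state on $\partial C$, contradicting interiority of the centre. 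Hence every boundary point is extreme and $C$ is balanced.

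Finally, locality of an arbitrary Bregman divergence is then immediate and requires no computation. The locality condition compares $D_{F}\left(\left(1-t\right)s_{0}+ts_{1}\Vert s_{0}\right)$ with $D_{F}\left(\left(1-t\right)s_{0}+ts_{2}\Vert s_{0}\right)$ for states $s_{1},s_{2}$ orthogonal to $s_{0}$; but by uniqueness of the antipode the only state orthogonal to $s_{0}$ is $s_{0}'$, so necessarily $s_{1}=s_{2}=s_{0}'$ and the two sides are literally the same expression. Hence every regret function $D_{F}$ --- indeed every function of the states --- satisfies the locality condition on a spectral set of rank $2$.
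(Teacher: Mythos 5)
Your overall skeleton is the natural one, and it matches what the paper's prose asserts about rank-2 spectral sets (orthogonality coincides with antipodality, so locality becomes trivial); the paper itself omits its proof of this proposition, so there is nothing to compare line by line. But your argument has a genuine gap, and it sits exactly where the content of the proposition lies. Your three-orthogonal-states obstruction correctly shows that for a \emph{fixed} test $\phi$ with $\phi(e)=1$ and $\min_{C}\phi=0$ the face $\phi^{-1}(0)$ is a single point; that step is sound (it needs only the improved Carath\'eodory theorem plus the fact that a face contains every decomposition of its elements). It does \emph{not}, however, show that the only state orthogonal to $e$ is $e'$: orthogonality of $s$ and $e$ only says that \emph{some} test separates them, possibly on a proper face, and different tests with $\phi(e)=1$ could a priori have different singleton zero-faces. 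So ``uniqueness of the antipode'', on which your final locality argument rests entirely (the step $s_{1}=s_{2}=s_{0}'$), is not established at the point where you claim it.

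That uniqueness would indeed follow from uniqueness of the centre, since two antipodes $s_{1},s_{2}$ of $e$ would give $\frac{1}{2}\left(e+s_{1}\right)=c=\frac{1}{2}\left(e+s_{2}\right)$; so everything reduces to your third paragraph, which is precisely the step you leave as a sketch. The proposed reduction --- project along a line to two dimensions and invoke the two-dimensional classification --- does not work as stated: spectrality and rank are defined through orthogonality, orthogonality is defined through tests on the smallest face containing the two states, and neither notion passes through a projection $\pi$ in the direction you need. A test on $\pi\left(C\right)$ pulls back to a test on $C$, but a test on $C$ (or on a face of $C$) need not factor through $\pi$; worse, a decomposition of a point $z\in\pi\left(C\right)$ into extreme points of $\pi\left(C\right)$ lifts only to a decomposition of \emph{some} point of $\pi^{-1}\left(z\right)\cap C$, not of a prescribed lift, so the property ``all orthogonal decompositions have the same spectrum'' cannot be inherited from $C$ by its shadow. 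The projection in the paper's proof of the improved Carath\'eodory theorem serves a different purpose (a tangent-line, majorization argument), not a transfer of spectrality. Hence the heart of the proposition --- that all orthogonal pairs share one midpoint, equivalently that $C$ is balanced --- remains unproven in your proposal; the locality claim, as you correctly observe, is an immediate corollary once balancedness is known, and that concluding observation is the one part I would keep unchanged.
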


The following lemma follows from Alexandrov's theorem. See \cite{Rockafeller1970} Theorem 25.5 for details.

\begin{lemma}
A convex function on a finite dimensional convex set is differentiable
almost everywhere with respect to the Lebesgue measure.
\end{lemma}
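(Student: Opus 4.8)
The plan is to reduce the statement to the relative interior, where the function is finite and locally Lipschitz, and then upgrade the almost-everywhere existence of partial derivatives to genuine differentiability by exploiting convexity. Throughout, let $f$ denote the convex function and let $C$ have dimension $d$, so that the relevant Lebesgue measure is the $d$-dimensional measure on the affine hull of $C$, which we identify with $\mathbb{R}^{d}$. The relative boundary of a finite-dimensional convex set is a Lebesgue-null set, so it suffices to prove differentiability almost everywhere on the relative interior $U$, which is an open convex subset of $\mathbb{R}^{d}$ on which $f$ is finite.

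First I would establish that $f$ is locally Lipschitz on $U$. A finite convex function is locally bounded above, since on a small simplex around any interior point it is dominated by the maximum of its values at the vertices, and convexity then forces a matching local lower bound; local boundedness of a convex function is classically equivalent to local Lipschitz continuity. At this point one legitimate route is simply to cite Rademacher's theorem, which yields differentiability almost everywhere for any locally Lipschitz function. To stay closer to the cited Rockafellar argument, however, I would instead proceed coordinatewise: for each direction $e_{i}$ the restriction of $f$ to a line parallel to $e_{i}$ is a one-dimensional convex function, hence has monotone one-sided derivatives and is differentiable outside a countable set. By Fubini's theorem the set $N_{i}\subset U$ on which the partial derivative $\partial f/\partial x_{i}$ fails to exist is Lebesgue-null, so on the full-measure set $U\setminus\bigcup_{i=1}^{d}N_{i}$ all $d$ partial derivatives exist simultaneously.

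The decisive convexity input enters last: at an interior point where all partial derivatives exist, $f$ is differentiable. Indeed, every subgradient $g\in\partial f(x)$ has its $i$-th coordinate squeezed between the left and right $i$-th directional derivatives; when the two-sided partial derivative exists these coincide, forcing $g_{i}=\partial f/\partial x_{i}(x)$ for \emph{every} subgradient. Hence the subdifferential reduces to the single point $\nabla f(x)$, and a convex function whose subdifferential is a singleton at an interior point is differentiable there. Combining this with the previous step gives differentiability on the full-measure set $U\setminus\bigcup_{i}N_{i}$, which proves the lemma. I expect the main obstacle to be precisely this final passage from coordinatewise to Fréchet differentiability: it fails for general functions and relies essentially on the squeezing of subgradient coordinates between one-sided directional derivatives, a property available only because $f$ is convex.
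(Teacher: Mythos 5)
Your proof is correct. Note, however, that the paper itself offers no argument for this lemma at all: it simply points to Rockafellar's \emph{Convex Analysis}, Theorem 25.5 (and somewhat misleadingly invokes Alexandrov's theorem, which is the stronger second-order statement that convex functions are \emph{twice} differentiable almost everywhere — overkill for what is needed here). What you have written is essentially the standard proof of the cited Rockafellar result: reduce to the relative interior, whose complement is null; obtain local Lipschitz continuity from finiteness and convexity; get almost-everywhere existence of each partial derivative from one-dimensional convexity plus Fubini; and finally use the specifically convex upgrade (Rockafellar's Theorems 25.2 and 25.1) that existence of all partial derivatives at an interior point squeezes every subgradient coordinate, collapses the subdifferential to a singleton, and hence yields Fréchet differentiability. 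The two points you leave implicit — the Borel measurability of the exceptional sets $N_i$ needed to apply Fubini (the one-sided partials are pointwise limits of continuous difference quotients, so this is routine), and the proof that a singleton subdifferential at an interior point implies differentiability — are standard and no less detailed than the citation the paper relies on. Your parenthetical observation that one could instead invoke Rademacher's theorem once local Lipschitz continuity is established is also a legitimate shortcut, and is in fact the more general route since it does not use convexity beyond that first step.
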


\begin{theorem}
Let $C$ be a convex set with at least three orthogonal states. If a
regret function $D_{F}$ defined on $C$ is local then it is a Bregman
divergence generated by the entropy times some constant.
\end{theorem}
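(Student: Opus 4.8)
The plan is to restrict to the two-dimensional simplex spanned by three orthogonal pure states and there solve the functional equation that locality imposes, producing the logarithm exactly as in McCarthy's characterisation of logarithmic score. Fix orthogonal states $e_1,e_2,e_3$ and put $\Delta=\mathrm{conv}(e_1,e_2,e_3)$. On $\Delta$ the orthogonal decomposition of a state is unique, a state is just a probability vector $m=(m_1,m_2,m_3)$, and the entropy becomes the Shannon entropy $H(m)=-\sum_i m_i\ln m_i$. By the preceding lemma the convex function $F$ is differentiable almost everywhere, so at almost every interior $m$ the optimal action is the tangent to $F$ and $D_F(s\Vert m)=F(s)-F(m)-\langle\nabla F(m),s-m\rangle$ is the ordinary Bregman divergence; I would run the whole argument at such smooth points and recover the remaining points by continuity of the regret.

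First I would transcribe locality at each vertex. For $s_0=e_j$ every interior $m$ can be written as $(1-t)e_j+t\sigma$ with $\sigma$ on the opposite edge and $t=1-m_j$, so locality is the statement that $D_F(e_j\Vert m)$ depends on the single weight $m_j$. Writing everything in the coordinates $(m_1,m_2)$ with $m_3=1-m_1-m_2$ and subtracting the three vertex relations from one another expresses $\partial_{m_1}F$ and $\partial_{m_2}F$ as sums of functions of one variable; integrating then forces $F$ to be separable, $F(m)=\sum_i f_i(m_i)$ up to an affine term. The hypothesis of three orthogonal states is used precisely here: with only two orthogonal states $\Delta$ is a segment and, as the earlier proposition shows, locality is automatic, so the third vertex is what over-determines $F$.

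The decisive step is to feed this separable form back into the vertex conditions, since separability by itself is too weak (for example $\sum_i m_i^2$ is separable but not local). Computing $D_F(e_1\Vert m)$ for a separable $F$, its dependence on $(m_2,m_3)$ assembles into $A_2(m_2)+A_3(m_3)$ with $A_i(x)=x f_i'(x)-f_i(x)$, hence $A_i'(x)=x f_i''(x)$; locality requires this quantity to be constant along every line $m_2+m_3=\mathrm{const}$, which forces $A_2'$ and $A_3'$ to be one and the same constant. Repeating this at the vertices $e_2$ and $e_3$ and comparing the three constants pins down a single $\kappa$ with $x f_i''(x)=\kappa$ for every $i$. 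Integrating yields $f_i(x)=\kappa\,x\ln x$ up to an affine term, so on $\Delta$ one has $F=\kappa\sum_i m_i\ln m_i=-\kappa H$ modulo an affine function and therefore $D_F=\kappa\,D_H$, with $\kappa\ge 0$ by convexity. This is exactly where the logarithm is produced and is the analytic heart of the theorem.

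It remains to globalise. The simplex argument applies to every triple of orthogonal pure states, giving $D_F=\kappa\,D_H$ on each such $\Delta$, and two such simplices that share an orthogonal pair overlap in a common edge on which the binary divergence $D_H$ does not vanish identically; matching the two representations there forces the constant $\kappa$ to be the same for all triples. Since every state lies in such a simplex through an orthogonal decomposition, one obtains $D_F=\kappa\,D_H$ throughout, equivalently $F=-\kappa H$ modulo affine functions, with $\kappa\ge 0$ by convexity. I expect the delicate part of the globalisation to be assembling the per-simplex affine corrections into a single global affine function and reconciling the resulting identity with the fact (noted earlier for the square) that $-H$ need not be convex away from these simplices; this is precisely where admissibility of a local regret begins to constrain the shape of $C$. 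A second, more routine, difficulty is non-smoothness: the derivative manipulations are licit only almost everywhere, so convexity and continuity of $D_F$ must be used to extend each identity from the generic smooth points to all of $\Delta$.
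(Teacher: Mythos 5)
Your proposal reaches the correct conclusion but by a genuinely different route from the paper. The paper's proof spends essentially all of its effort on regularity: it introduces the edge functions $f_{i}\left(x\right)=D_{F}\left(s_{i},xs_{i}+\left(1-x\right)s_{i+1}\right)$, uses the fact that the regret is minimized at $Q=P$ to show that the symmetrized functions $f_{ij}=\frac{f_{i}+f_{j}}{2}$ are convex, proves that the one-sided derivatives of $f_{ij}$ agree, and then recovers differentiability of each individual $f_{i}$ from the identity $f_{i}=f_{ij}+f_{ik}-f_{jk}$ --- which is exactly where the hypothesis of three orthogonal states enters. Once differentiability is established, the paper does not derive the logarithm at all: it invokes Theorem 4 of \cite{Jiao2014} to conclude $F_{\mid K}=-c_{K}\cdot H_{\mid K}+g_{K}$, and then glues the constants $c_{K}$ and affine parts $g_{K}$ across overlapping simplices. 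You do the opposite: you largely assume regularity and prove the characterization yourself, deriving separability $F=\sum_{i}f_{i}\left(m_{i}\right)$ (up to an affine term) from the three vertex relations and then solving the functional equation that locality imposes, namely that $A_{2}\left(m_{2}\right)+A_{3}\left(m_{3}\right)$ with $A_{i}\left(x\right)=xf_{i}'\left(x\right)-f_{i}\left(x\right)$ is constant on lines $m_{2}+m_{3}=\mathrm{const}$, forcing $xf_{i}''\left(x\right)=\kappa$ and hence $f_{i}\left(x\right)=\kappa x\ln x$ plus an affine term. Your computation is correct (including the instructive remark that separability alone, e.g. $\sum_{i}m_{i}^{2}$, is not enough), and in effect it reproves the cited result of \cite{Jiao2014}, making the appearance of the logarithm explicit where the paper leaves it as a black box; your globalisation step is essentially the paper's, and both share the same unaddressed issue of whether the orthogonal triples cover $C$ and overlap in a connected way.

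The one point where you underestimate the difficulty is the step you call routine: passing from almost-everywhere differentiability to your second-derivative manipulations. The lemma in the paper gives only first-order differentiability a.e.; it licenses neither $f_{i}''$ nor even the absolute continuity of $f_{i}'$ needed to integrate $f_{i}''=\kappa/x$. This is precisely the gap the paper's convexity-and-symmetrization argument exists to bridge, so it cannot be waved away by continuity alone. A clean repair within your approach is to avoid differentiating $A_{i}$ altogether: the relation $A_{2}\left(x\right)+A_{3}\left(y\right)=h\left(x+y\right)$ on an open region is a Pexider equation, whose measurable solutions are affine with a common slope; this gives $xf_{i}'\left(x\right)-f_{i}\left(x\right)=ax+b_{i}$ a.e., which integrates (using only local Lipschitzness of $F$, hence of the $f_{i}$) to $f_{i}\left(x\right)=ax\ln x$ plus an affine term. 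With that substitution your argument stands at the same level of rigor as the paper's proof.
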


Note that the constant mentioned in the theorem has to be negative in order to get a convex function. Obviously one could have defined Bregman divergences of concave functions in almost the same way as we have defined Bregman divergences for convex functions.

\begin{proof}
Let $K$ denote the convex hull of a set $s_{0},s_{1},\dots s_{n}$
of orthogonal states. Let $f_{i}$ denote the function $f_{i}\left(x\right)=D_{F}\left(s_{i},xs_{i}+\left(1-x\right)s_{i+1}\right)$.
Note that $f_{i}$ is continuous. Let $P=\sum p_{i}s_{i}$ and $Q=\sum q_{i}P_{i}.$
If $F$ is differentiable in $P$ then locality implies that
\[
D_{F}\left(P,Q\right)  = \sum p_{i}d\left(s_{i},Q\right)-\sum p_{i}d\left(s_{i},P\right) = \sum p_{i}f_{i}\left(q_{i}\right)-\sum p_{i}f_{i}\left(p_{i}\right)
  = \sum p_{i}\left(f_{i}\left(q_{i}\right)-f_{i}\left(p_{i}\right)\right).
\]
Since $F$ is differentiable almost surely on $K$ we see that continuity
implies that the equality most hold for all distributions $P.$

As a function of $Q$ the regret has minimum when $Q=P.$ We have 
\[
x\left(f_{i}\left(y\right)-f_{i}\left(x\right)\right)+z\left(f_{j}\left(w\right)-f_{j}\left(z\right)\right)\geq0.
\]
where $x+z=y+w.$ We also have 
\[
x\left(f_{j}\left(y\right)-f_{j}\left(x\right)\right)+z\left(f_{i}\left(w\right)-f_{i}\left(z\right)\right)\geq0
\]
implying that 
\[
x\left(f_{ij}\left(y\right)-f_{ij}\left(x\right)\right)+z\left(f_{ij}\left(w\right)-f_{ij}\left(z\right)\right)\geq0
\]
 where $f_{ij}=\frac{f_{i}+f_{j}}{2}.$ 

Assume that $x=z=\frac{y+w}{2}$. Then
\[
\frac{f_{ij}\left(y\right)+f_{ij}\left(w\right)}{2}  \geq  f_{ij}\left(x\right)
\]
so that $f_{ij}$ is convex. Therefore $f_{ij}$ is differentiable
from left and right. We have
\[
\left(y+\epsilon\right)\left(f_{ij}\left(y\right)-f_{ij}\left(y+\epsilon\right)\right)+\left(y-\epsilon\right)\left(f_{ij}\left(w\right)-f_{ij}\left(y-\epsilon\right)\right)\geq0
\]
with equality when $\epsilon=0.$ We differentiate with respect to
$\epsilon$ from right.
\[
\left(f_{ij}\left(y\right)-f_{ij}\left(y+\epsilon\right)\right)+\left(y+\epsilon\right)\left(-f_{ij+}'\left(y+\epsilon\right)\right)-\left(f_{ij}\left(w\right)-f_{ij}\left(y-\epsilon\right)\right)+\left(y-\epsilon\right)\left(f_{ij-}'\left(y-\epsilon\right)\right)
\]
which is positive for $\epsilon=0.$ This implies that
\[
y\cdot f_{ij-}'\left(y\right) \geq y\cdot f_{ij+}'\left(y\right).
\]
Since $f_{ij}$ is convex we have $f_{ij-}'\left(y\right)\leq f_{ij+}'\left(y\right)$
which in combination with the previous inequality implies that $f_{ij-}'\left(y\right)=f_{ij+}'\left(y\right)$
so that $f_{ij}$ is differentiable. Since $f_{i}=f_{ij}+f_{ik}-f_{jk}$
the function $f_{i}$ is also differentiable. 

Since $f_i$ is differentiable the regret function is a Bregman divergence, we can use Thm. 4 in \cite{Jiao2014} to conclude that there exists an affine function defined on $K$ such that
$F_{\mid K}=-c_{K}\cdot H_{\mid K}+g_{K}$. If $K$ and $L$ simplices
such that $x\in K\cap L$ then 
\[
-c_{K}\cdot H_{\mid K}\left(x\right)+g_{K}\left(x\right)=-c_{L}\cdot H_{\mid L}\left(x\right)+g_{L}\left(x\right)
\]
so that 
\[
\left(c_{L}-c_{K}\right)\cdot H_{\mid K}\left(x\right)=g_{L}\left(x\right)-g_{K}\left(x\right).
\]
If $K\cap L$ has dimension greater than zero then the right hand
side is affine so the left hand side is affine which is only possible
when $c_{K}=c_{L}.$ Therefore we also have $g_{L}\left(x\right)=g_{K}\left(x\right)$
for all $x\in K\cap L.$ Therefore the functions $g_{K}$ can be extended
to function on the whole of $C.$
\end{proof}

The previous proof sketch can even be used to demonstrate that a convex set with local Bregman divergence must be spectral. The notion of a spectral set is related to self-duality of the cone of positive elements, which leads to the following conjecture.
\begin{conjecture}
If a finite dimensional convex compact set has a local regret function and has a transitive symmetry
group then the convex set can be represented as positive elements of a
Jordan algebra with trace 1.
\end{conjecture}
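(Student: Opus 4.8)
The plan is to reduce the conjecture to the Koecher--Vinberg theorem: a finite dimensional cone is \emph{symmetric}, meaning simultaneously homogeneous and self-dual, if and only if it is the cone of squares of a Euclidean Jordan algebra. So it suffices to prove that the positive cone generated by $C$ is symmetric. The two hypotheses should contribute the two defining properties, but with a twist worth flagging at the outset: the symmetry group of a compact convex body is itself compact, so a group acting transitively can only act transitively on the \emph{pure states} (the extreme points form a compact orbit), not on the interior. The compact, pure-state-transitive symmetry group is exactly the right object to furnish self-duality, whereas homogeneity---which for a symmetric cone requires non-compact automorphisms such as the quadratic representations---will have to be manufactured separately.

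For self-duality I would first use the remark preceding the conjecture, that a local regret function forces $C$ to be spectral; thus every state has orthogonal decompositions sharing a common spectrum, and the trace is a well-defined linear functional equal to $1$ on each pure state. Averaging the ambient Euclidean inner product over the compact symmetry group produces an invariant inner product. Since the group permutes pure states transitively, all pure states acquire the same norm, which I normalise to $1$; and test-theoretic orthogonality of pure states should coincide with orthogonality in this inner product, because on the relevant face the separating test is realised by pairing against the orthogonal complement. Diagonalising an arbitrary element of the cone in a common orthogonal basis of pure states then shows that it pairs nonnegatively with every pure state exactly when its spectrum is nonnegative, which is to say the cone equals its dual. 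This is the content of ``spectral implies self-dual'' alluded to in the text.

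It remains to upgrade self-duality to homogeneity. Here I would either invoke the structure theory of self-dual cones to show that transitivity of the symmetry group on the extreme rays forces the full automorphism group to act transitively on the interior, or, more constructively, build the Jordan product directly: the common spectral basis gives a functional calculus, hence squares and quadratic representations $P(a)$ of invertible elements, and one checks that each $P(a)$ is a cone automorphism and that these generate a transitive action on the interior. Either route yields a symmetric cone, so Koecher--Vinberg supplies a Euclidean Jordan algebra whose cone of squares is our positive cone. The fixed point of the compact symmetry group (obtained by averaging) is the normalised Jordan unit; the paper's trace and the Jordan trace are both linear, positive, and equal to $1$ on every primitive idempotent, hence coincide; therefore $C$ is precisely the set of positive elements of trace $1$.

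The hard part is unquestionably the homogeneity step. Transitivity of the given symmetry group does not by itself deliver it---Vinberg's examples of homogeneous cones that fail to be self-dual show that the two properties are genuinely independent, and here we face the reverse difficulty of producing homogeneity out of self-duality plus only a compact symmetry. The crux is therefore to manufacture the non-compact automorphisms, equivalently to define a genuine Jordan product from the purely convex-geometric spectral data and verify the Jordan identity and power-associativity. Controlling the possible non-uniqueness of orthogonal decompositions---which by spectrality share a spectrum but not the decomposition itself---is exactly what must be leveraged to make the functional calculus, and hence the product, well defined; this is where I expect the real work, and the residual gap that keeps the statement a conjecture, to lie.
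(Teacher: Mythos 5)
The statement you set out to prove is labelled a \emph{conjecture} in the paper, and the paper offers no proof of it at all --- only the remark that locality forces the set to be spectral and that spectrality ``is related to'' self-duality of the positive cone. Your reduction to the Koecher--Vinberg theorem is exactly the route that remark points toward, and your reading of ``transitive symmetry group'' as transitivity on pure states is the sensible one (the affine automorphism group of a compact convex body fixes the barycenter and is compact, so it cannot act transitively on the whole body). But what you have written is a research program, not a proof, and both of its pillars have genuine holes.

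First, the self-duality step is incomplete in two places. Self-duality with respect to the averaged inner product requires both inclusions: (a) the cone lies in its dual, i.e.\ any two pure states pair nonnegatively --- your claim that test-theoretic orthogonality coincides with metric orthogonality in the invariant inner product is an assertion, not an argument, and nothing in the paper's definitions supplies it; and (b) the dual cone lies in the cone. For (b) you propose to ``diagonalise an arbitrary element of the cone,'' but an element of the \emph{dual} cone is a priori just a vector of the ambient space, and the paper's notion of spectrality only provides orthogonal decompositions of \emph{positive} elements; extending spectral decompositions to the full linear span (every vector a real combination of mutually orthogonal pure states) is an additional unproved hypothesis, one that holds in Jordan algebras and is therefore circular here. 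Second, as you yourself concede, homogeneity is missing entirely: a compact transitive symmetry group plus self-duality does not obviously generate the non-compact automorphisms (quadratic representations), and constructing a well-defined, bilinear Jordan product satisfying the Jordan identity out of purely convex-geometric spectral data --- despite the non-uniqueness of orthogonal decompositions --- is precisely the open problem. Your attempt correctly identifies where the difficulty lies, but it does not close it; the statement remains after your proposal exactly what it is in the paper, a conjecture.
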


\section{Concavity of entropy in Jordan algebras}

The density matrices with complex entries play a crucial role in the
mathematical theory of quantum mechanics and it is well-known that the set of
density matrices is a spectral set. For each density matrix the spectrum
equals the usual spectrum calculated as roots of the characteristic
polynomial. In the 1930'ties Jordan generalized the notion of Hermitian
complex matrix to the notion of a Jordan algebra in an attempt to provide an
alternative to the complex Hilbert spaces as the mathematical basis of
quantum mechanics. For instance the complex
Hermitean matrices form a Jordan algebra with the composition $x\circ y=%
\frac{1}{2}\left( xy+yx\right) .$ 

An Euclidean Jordan algebra is an algebra with composition $\circ $ that is
commutative and satisfies the Jordan identity 
\[
\left( x\circ y\right) \circ \left( x\circ x\right) =x\circ \left( y\circ
\left( x\circ x\right) \right) .
\]%
Further it is assumed that 
\[
\sum_{i=1}^{n}x_{i}^{2}=0
\]
implies that $x_{i}=0$ for all $i.$ In an Euclidean Jordan algebra we write $%
x\geq 0$ if $x$ is a sum of squares. 

For a number in a real division algebra $tr$ is defined as the real part
of the number so that $tr\left( xy\right) =tr\left( yx\right) $. For a matrix $%
\left( M_{mn}\right) $ the trace $\mathrm{Tr}$ is defined by $\mathrm{Tr}%
\left( M\right) =tr\left( \sum_{n}M_{nn}\right) .$ Then $\mathrm{Tr}\left( MN\right)  =\mathrm{Tr}\left( NM\right)$ .

In a finite dimensional Eucledian Jordan algebra the density operators are defined as the positive elements with trace 1. With this definition the set of
density operators of a Jordan algebra is a spectral set. Any Euclidean Jordan algebra can be decomposed into
5 types leading to the following convex sets:

\begin{description}
\item[Real] Density matrices over the real numbers.

\item[Complex] Density matrices over the complex numbers.

\item[Quaternionic] Density matrices over the quaternions.

\item[Exceptional] $3\times 3$ density matrices with entries that are
octonions.

\item[Spin type] A unit ball in $d$ dimensions.
\end{description}
See \cite{McCrimmon2004} for general results on Jordan algebras and \cite {Adler1995} for details about quantum mechanics described using quaternions.

The entropy is defined as for general convex set and we will prove that $H$
is a concave on the cone of positive elements. For the unit ball the entropy is centrally
symmetric and is obviously concave. The following exposition is based on a
similar result for complex matrix algebras stated in \cite{Petz2008}, but
the proofs have been changed so that they are valid without assuming commutativity or associativity
of the division algebra.

\begin{lemma}
Let $A$ and $B$ denote Hermitian matrices. Assume that $A=\sum t_{\ell
}E_{\ell }$ where $E_{\ell }$ are orthogonal idempotents. If $f$ is a
holomorphic function around the spectrum of $A$ then 
\[
\frac{\mathrm{d}}{\mathrm{d}t}f\left( A+tB\right) _{\mid
t=0}=\sum_{m,n}a_{mn}\cdot E_{m}BE_{n}.
\]%
where 
\[
a_{mn}=\frac{f\left( t_{m}\right) -f\left( t_{n}\right) }{t_{m}-t_{n}}  \,\mathrm{for} \,t_{m}\neq t_{n}
\]
and $a_{mn}=f^{\prime }\left( t_{m}\right)$ for $t_{m}=t_{n}.$%
\end{lemma}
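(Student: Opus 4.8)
The plan is to use the holomorphic functional calculus and differentiate the resolvent. Since $f$ is holomorphic on a neighborhood of the spectrum of $A$, for $t$ small enough the spectrum of $A+tB$ remains inside a fixed contour $\Gamma$ enclosing the points $t_\ell$, and we may write
\[
f\left(A+tB\right)=\frac{1}{2\pi i}\oint_\Gamma f\left(z\right)\left(z-A-tB\right)^{-1}\,\mathrm dz.
\]
Differentiating the identity $\left(z-A-tB\right)\left(z-A-tB\right)^{-1}=I$ at $t=0$ gives the resolvent derivative $\frac{\mathrm d}{\mathrm dt}\left(z-A-tB\right)^{-1}_{\mid t=0}=\left(z-A\right)^{-1}B\left(z-A\right)^{-1}$, so that after differentiating under the integral sign
\[
\frac{\mathrm d}{\mathrm dt}f\left(A+tB\right)_{\mid t=0}=\frac{1}{2\pi i}\oint_\Gamma f\left(z\right)\left(z-A\right)^{-1}B\left(z-A\right)^{-1}\,\mathrm dz.
\]

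First I would insert the spectral decomposition. Because the $E_\ell$ are orthogonal idempotents with $\sum_\ell E_\ell=I$, the resolvent splits as $\left(z-A\right)^{-1}=\sum_\ell\left(z-t_\ell\right)^{-1}E_\ell$. Substituting both factors and pulling the idempotents outside the now scalar integral yields
\[
\frac{\mathrm d}{\mathrm dt}f\left(A+tB\right)_{\mid t=0}=\sum_{m,n}E_mBE_n\cdot\frac{1}{2\pi i}\oint_\Gamma\frac{f\left(z\right)}{\left(z-t_m\right)\left(z-t_n\right)}\,\mathrm dz.
\]
It then remains to evaluate the scalar contour integrals by residues: for $t_m\neq t_n$ the two simple poles contribute $\bigl(f\left(t_m\right)-f\left(t_n\right)\bigr)/\left(t_m-t_n\right)$, and for $t_m=t_n$ the double pole contributes $f'\left(t_m\right)$ by Cauchy's formula for the first derivative. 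These are exactly the coefficients $a_{mn}$. As a cross-check one can verify the formula directly on monomials $f\left(z\right)=z^k$, where $\frac{\mathrm d}{\mathrm dt}\left(A+tB\right)^k_{\mid t=0}=\sum_{j=0}^{k-1}A^jBA^{k-1-j}$ collapses to $\sum_{m,n}E_mBE_n\sum_{j=0}^{k-1}t_m^jt_n^{k-1-j}$, and the inner geometric sum is the divided difference $\bigl(t_m^k-t_n^k\bigr)/\left(t_m-t_n\right)$.

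I expect the main obstacle to be the absence of associativity in the Jordan-algebraic setting, in particular for the octonionic exceptional algebra, where the ambient matrix product is neither associative nor commutative, so the manipulations above must be justified inside an associative environment. The rescue is that $A$, the scalar $z$, and all the idempotents $E_\ell$ lie in the associative commutative subalgebra generated by $A$; hence the resolvent $\left(z-A\right)^{-1}$ and its expansion in the $E_\ell$ are unambiguous, and the only factor foreign to this subalgebra is the single $B$ sandwiched between two resolvents. I would therefore reduce every expression to triple products of the form $E_mBE_n$ and appeal to the Peirce decomposition relative to the orthogonal idempotents $E_\ell$, which guarantees that one external element flanked by two members of a compatible family of orthogonal idempotents has a well-defined product independent of bracketing. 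Making precise that differentiation under the integral sign and the residue evaluation survive in this weakened framework is the delicate part; once the triple products $E_mBE_n$ are shown to be unambiguous, the scalar residue calculus is unaffected and the stated formula follows.
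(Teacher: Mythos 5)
Your proposal is correct and takes a genuinely different route from the paper. The paper's entire proof is, in effect, your closing ``cross-check'': verify the identity for $f(z)=z^{r}$ by expanding $(A+tB)^{r}$ and differentiating at $t=0$ (the geometric sum $\sum_{j=0}^{r-1}t_{m}^{j}t_{n}^{r-1-j}$ being exactly the divided difference $\bigl(t_{m}^{r}-t_{n}^{r}\bigr)/(t_{m}-t_{n})$), extend by linearity to polynomials, and pass to general holomorphic $f$ by polynomial approximation near the spectrum. Your main route instead writes $f(A+tB)$ as a Cauchy integral, differentiates the resolvent, inserts $(z-A)^{-1}=\sum_{\ell}(z-t_{\ell})^{-1}E_{\ell}$, and evaluates the scalar integrals by residues. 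What each approach buys: yours treats all holomorphic $f$ in a single stroke and avoids the limit-interchange justification that the paper's approximation step silently requires (uniform convergence of polynomials on a neighbourhood of the spectrum does propagate to the divided differences, but that deserves a line); the paper's approach is more elementary and, crucially for its intended application, uses only powers of a single element, which remain meaningful in a merely power-associative (Jordan) setting. Your last paragraph correctly identifies the corresponding weakness of the resolvent route: the sandwich $(z-A)^{-1}B(z-A)^{-1}$ has no intrinsic meaning without associativity, and the Peirce-decomposition rescue is indeed the delicate point --- in a Jordan algebra the individual products $E_{m}BE_{n}$ are not well defined, only the symmetrized combinations $E_{m}BE_{n}+E_{n}BE_{m}$ (the values of the linearized quadratic representation $2U_{E_{m},E_{n}}B$) are; this suffices here because $a_{mn}=a_{nm}$, but it should be stated explicitly. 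For the associative cases (real, complex, and, after the standard embedding into complex matrices of doubled size, quaternionic) your proof is complete as written.
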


\begin{proof}
First assume that $f\left(z\right)=z^{r}.$ Then the result follows directly from an expansion of $\left(A+tB\right)^{r}$ followed by differentiation. Therefore the theorem also holds for all polynomials and  thereby also for any holomorphic function because such functions can be approximated by polynomials.
\end{proof}

By taking the trace on each side of the equation in the previous lemma we get the following result.

\begin{lemma}
For Hermitian matrices $A$ and $B$ we have 
\[
\left. \frac{\mathrm{d}}{\mathrm{d}t}\mathrm{Tr}\left( f\left( A+tB\right)
\right) \right\vert _{t=0}=\mathrm{Tr}\left( f^{\prime }\left( A\right)
B\right) .
\]
\end{lemma}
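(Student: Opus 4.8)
The plan is to obtain the identity by taking the trace of the equation established in the preceding lemma. Starting from
\[
\frac{\mathrm{d}}{\mathrm{d}t}f\left( A+tB\right) _{\mid t=0}=\sum_{m,n}a_{mn}\cdot E_{m}BE_{n},
\]
I would apply $\mathrm{Tr}$ to both sides. Since the trace is linear and commutes with differentiation (differentiation is a limit of difference quotients, which the trace passes through by linearity and continuity), the left hand side becomes exactly $\left.\frac{\mathrm{d}}{\mathrm{d}t}\mathrm{Tr}\left( f\left( A+tB\right)\right)\right|_{t=0}$, which is the quantity we want.

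The main work is then to simplify the right hand side, $\sum_{m,n}a_{mn}\cdot\mathrm{Tr}\left( E_{m}BE_{n}\right)$. Here I would use the cyclic property $\mathrm{Tr}\left( MN\right)=\mathrm{Tr}\left( NM\right)$ recorded earlier, together with the fact that the $E_\ell$ are orthogonal idempotents, so $E_nE_m=\delta_{mn}E_m$. Cycling gives $\mathrm{Tr}\left( E_{m}BE_{n}\right)=\mathrm{Tr}\left( E_{n}E_{m}B\right)=\delta_{mn}\mathrm{Tr}\left( E_{m}B\right)$. Consequently only the diagonal terms $m=n$ survive, and those carry the coefficient $a_{mm}=f^{\prime}\left( t_{m}\right)$ from the second case of the previous lemma. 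The right hand side therefore collapses to $\sum_{m}f^{\prime}\left( t_{m}\right)\mathrm{Tr}\left( E_{m}B\right)$.

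To finish, I would recognize this sum as the trace of $f^{\prime}\left( A\right)B$. Because $A=\sum_{\ell}t_{\ell}E_{\ell}$ with orthogonal idempotents, the functional calculus gives $f^{\prime}\left( A\right)=\sum_{m}f^{\prime}\left( t_{m}\right)E_{m}$, and hence $\mathrm{Tr}\left( f^{\prime}\left( A\right)B\right)=\sum_{m}f^{\prime}\left( t_{m}\right)\mathrm{Tr}\left( E_{m}B\right)$ by linearity of the trace. This matches the reduced right hand side term by term, yielding
\[
\left.\frac{\mathrm{d}}{\mathrm{d}t}\mathrm{Tr}\left( f\left( A+tB\right)\right)\right|_{t=0}=\mathrm{Tr}\left( f^{\prime}\left( A\right)B\right).
\]

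I expect the only genuinely delicate point to be the cancellation of the off-diagonal terms: one must be careful that the cyclicity of the trace holds in the Jordan-algebra setting being used here, not merely for ordinary matrices, and that the idempotent relation $E_nE_m=\delta_{mn}E_m$ is available. Given the trace identity $\mathrm{Tr}\left( MN\right)=\mathrm{Tr}\left( NM\right)$ quoted in the paper, this step is routine, so the lemma follows almost immediately from its predecessor — consistent with the author's remark that it is obtained simply by taking the trace.
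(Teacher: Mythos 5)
Your proposal is correct and takes exactly the paper's route: the paper's entire argument is the remark that the lemma follows ``by taking the trace on each side of the equation in the previous lemma,'' and you have supplied the implicit details correctly --- the trace kills the off-diagonal terms via $\mathrm{Tr}\left(E_{m}BE_{n}\right)=\mathrm{Tr}\left(E_{n}E_{m}B\right)=\delta_{mn}\mathrm{Tr}\left(E_{m}B\right)$, the surviving diagonal coefficients are $a_{mm}=f^{\prime}\left(t_{m}\right)$, and $f^{\prime}\left(A\right)=\sum_{m}f^{\prime}\left(t_{m}\right)E_{m}$ reassembles the sum into $\mathrm{Tr}\left(f^{\prime}\left(A\right)B\right)$. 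Your caution about cyclicity in the non-associative setting is well placed, and it is covered by the identity $\mathrm{Tr}\left(MN\right)=\mathrm{Tr}\left(NM\right)$ that the paper records just before these lemmas.
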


\begin{theorem}
In a Jordan algebra the entropy is a strictly
concave function on the cone of positive elements.
\end{theorem}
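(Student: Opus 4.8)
The plan is to prove strict concavity of the entropy $H$ on the cone of positive elements by reducing it to a second-derivative computation along line segments, using the two lemmas just established as the analytic engine. Recall that on a spectral set the entropy is $H(x)=-\mathrm{Tr}(x\ln x)$ when evaluated through any orthogonal (spectral) decomposition, so I would set $f(z)=-z\ln z$ and study $t\mapsto H(A+tB)=\mathrm{Tr}\,f(A+tB)$ along a segment, where $A$ is a positive element and $B=A_1-A_0$ is the direction connecting two distinct positive elements $A_0,A_1$. Strict concavity on the cone is equivalent to showing that this scalar function is strictly concave in $t$ for every such segment with $B\neq 0$, and the cleanest route is to show its second derivative is strictly negative.

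First I would differentiate once. By the second lemma, $\frac{\mathrm{d}}{\mathrm{d}t}\mathrm{Tr}\,f(A+tB)=\mathrm{Tr}\left(f'(A+tB)\,B\right)$, so it suffices to differentiate this again at an arbitrary interior point, which by translating the parameter amounts to computing $\frac{\mathrm{d}}{\mathrm{d}t}\mathrm{Tr}\left(f'(A+tB)\,B\right)$ at $t=0$. Now I apply the \emph{first} lemma to the function $f'$ in place of $f$: writing $A=\sum_\ell t_\ell E_\ell$ with the $E_\ell$ orthogonal idempotents summing to the identity, the lemma gives
\[
\frac{\mathrm{d}}{\mathrm{d}t}f'(A+tB)\Big|_{t=0}=\sum_{m,n} a_{mn}\,E_m B E_n,\qquad a_{mn}=\frac{f'(t_m)-f'(t_n)}{t_m-t_n}
\]
for $t_m\neq t_n$ and $a_{mn}=f''(t_m)$ otherwise. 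Taking the trace against $B$ and using $\mathrm{Tr}(E_m B E_n\, B)=\mathrm{Tr}(E_m B E_n B E_m)=\mathrm{Tr}\bigl((E_m B E_n)(E_m B E_n)^{*}\bigr)$ — here the idempotents are self-adjoint and the trace form is the positive-definite inner product making the algebra Euclidean — I obtain
\[
\frac{\mathrm{d}^2}{\mathrm{d}t^2}H(A+tB)\Big|_{t=0}=\sum_{m,n} a_{mn}\,\|E_m B E_n\|^2,
\]
where $\|Y\|^2=\mathrm{Tr}(Y Y^{*})\geq 0$.

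The conclusion then rests on a sign analysis of the coefficients $a_{mn}$ for $f(z)=-z\ln z$. Since $f'(z)=-\ln z-1$ is strictly decreasing on $(0,\infty)$, each divided difference $a_{mn}$ is strictly negative: for $t_m\neq t_n$ the numerator and denominator have opposite signs, and for $t_m=t_n$ we have $a_{mm}=f''(t_m)=-1/t_m<0$. Hence every term in the sum is nonpositive, the total is strictly negative unless every $E_m B E_n$ vanishes, and the latter forces $B=\sum_{m,n}E_m B E_n=0$. Therefore the second derivative is strictly negative for every nonzero direction $B$, giving strict concavity.

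The main obstacle I anticipate is justifying the two manipulations that silently use structure beyond plain matrix algebra: that the trace form $\langle Y,Z\rangle=\mathrm{Tr}(YZ^{*})$ is genuinely a positive-definite inner product on the Jordan algebra (this is exactly the Euclidean hypothesis $\sum x_i^2=0\Rightarrow x_i=0$), and that the quantity $\mathrm{Tr}\bigl((E_m B E_n)\,B\bigr)$ really equals $\|E_m B E_n\|^2$ — which requires the trace identity $\mathrm{Tr}(MN)=\mathrm{Tr}(NM)$ together with the orthogonality $E_m E_n=\delta_{mn}E_m$ to collapse cross terms. Because the lemmas were deliberately proved without commutativity or associativity of the underlying division algebra, I would want to check that the expansion of $(A+tB)^r$ used in the first lemma, and hence the formula for $a_{mn}$, is valid in the Jordan setting through the spectral/functional calculus for a single element adjoined to $B$; once that bookkeeping is in place, the sign argument above closes the proof.
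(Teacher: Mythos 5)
Your proof follows essentially the same route as the paper: differentiate $\mathrm{Tr}\,f(A+tB)$ twice using the two lemmas, expand the second derivative as $\sum_{m,n} a_{mn}\,\mathrm{Tr}\bigl((E_m B E_n)(E_m B E_n)^{*}\bigr)$, and conclude from the strict negativity of the divided differences of $f'(z)=-\ln z-1$. Your write-up is in fact slightly more careful than the paper's, which contains a sign typo (asserting $\mathrm{Tr}\bigl((E_m B E_n)(E_m B E_n)^{*}\bigr)<0$ where it should be nonnegative) and only concludes $\frac{\mathrm{d}^2}{\mathrm{d}t^2}\mathrm{Tr}\,f(A+tB)\leq 0$, whereas you close the strictness gap by observing that vanishing of all $E_m B E_n$ forces $B=\sum_{m,n}E_m B E_n=0$.
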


\begin{proof}
Let $f$ denote the holomorphic function $f\left( z\right) =-z\ln z,~z>0.$
We have to prove that $\mathrm{Tr}\left( f\left( \left( 1-t\right)
A+tX\right) \right) =\mathrm{Tr}\left( f\left( A+tB\right) \right) $ is
concave where $B=X-A$.  The second derivative can be calculated. 
\[
\left. \frac{\mathrm{d}^{2}}{\mathrm{d}t^{2}}\mathrm{Tr}\left( f\left(
A+tB\right) \right) \right\vert _{t=0} =\left. \frac{\mathrm{d}}{\mathrm{d}%
t}\mathrm{Tr}\left( f^{\prime }\left( A+tB\right) B\right) \right\vert _{t=0}
 =\mathrm{Tr}\left( \left. \frac{\mathrm{d}}{\mathrm{d}t}f^{\prime }\left(
A+tB\right) \right\vert _{t=0}B\right)= \mathrm{Tr}\left( \left( \sum_{m,n}\tilde{a}_{mn}\cdot E_{m}BE_{n}\right)
B\right) 
\]
where 
\[
\tilde{a}_{mn}=
\frac{f^{\prime }\left( t_{m}\right) -f^{\prime }\left( t_{n}\right) }{t_{m}-t_{n}} \, \mathrm{for}\,t_{m}\neq t_{n} 
\]
and $\tilde{a}_{mn}=f^{\prime \prime }\left( t_{m}\right)$ for $t_{m}=t_{n}.$ Then
\[
\left. \frac{\mathrm{d}^{2}}{\mathrm{d}t^{2}}\mathrm{Tr}\left( f\left(
A+tB\right) \right) \right\vert _{t=0}
= \sum_{m,n}\tilde{a}_{mn}\cdot \mathrm{Tr}\left( E_{m}BE_{n}B\right)  
= \sum_{m,n}\tilde{a}_{mn}\cdot \mathrm{Tr}\left( \left( E_{m}BE_{n}\right)
\left( E_{m}BE_{n}\right) ^{\ast }\right) 
\]
Since $\tilde{a}_{mn} < 0$ and $\mathrm{Tr}\left( \left(
E_{m}BE_{n}\right) \left( E_{m}BE_{n}\right) ^{\ast }\right) < 0$ we have 
$\frac{\mathrm{d}^{2}}{\mathrm{d}t^{2}}\mathrm{Tr}f\left( A+tB\right) \leq 0.
$
\end{proof}


%


\end{document}